\newtheorem{proof}{Proof}
\newtheorem{theorem}{Theorem}
\newcommand{\fado}{{\bf F{\cal A}do}}
\newcommand{\lixo}[1]{}
\newcommand{\tirar}[1]{}
\newcommand{\ICDFA}{\textsc{ICDFA}}
\newcommand{\ICDFAE}{$\mbox{\textsc{ICDFA}}_\emptyset$}
\newcommand{\DFAE}{$\mbox{\textsc{DFA}}_\emptyset$}
\newcommand{\DFA}{\textsc{DFA}}
\newcommand{\floor}[1]{\lfloor #1 \rfloor}
\begin{document}

\title{
Aspects of Enumeration and Generation with a String Automata
Representation
\thanks{Work partially
    funded by Funda\c{c}\~ao para a Ci\^encia e Tecnologia (FCT) and
    Program POSI.}\footnote{This paper was presented at the 8th
    Workshop on Descriptional  Complexity of Formal Systems, DCFS'06.}}

\author{
  Marco Almeida\\
{\tt mfa@ncc.up.pt}\\
\and Nelma Moreira \\
  {\tt nam@ncc.up.pt}\\
\and Rog\'erio Reis\\
  {\tt rvr@ncc.up.pt}\\
  DCC-FC \ \& LIACC,  Universidade do Porto \\
  R. do Campo Alegre 823, 4150 Porto, Portugal
}
\date{}
\maketitle

\lstdefinelanguage{algo}{
  morekeywords={for,if,then,do,while,return,break,else,continue,and,nil,to,
downto,print,def},
  morecomment=[l]{\%}}

\lstset{language=algo, 
  aboveskip=10pt, belowskip=10pt, 
  mathescape=true , basicstyle=\small}

\begin{abstract}In general, the representation of combinatorial
  objects is decisive for the feasibility of several enumerative
  tasks.  In this work, we show how a (unique) string representation
  for (complete) initially-connected deterministic automata (\ICDFA's)
  with $n$ states over an alphabet of $k$ symbols can be used for
  counting, exact enumeration, sampling and optimal coding, not only
  the set of ~\ICDFA's but, to some extent, the set of regular
  languages. An exact generation algorithm can be used to partition
  the set of ~\ICDFA's in order to parallelize the counting of minimal
  automata (and thus of regular languages).  We present also a uniform
  random generator for ~\ICDFA's that uses a table of pre-calculated
  values.  Based on the same table it is also possible to obtain an
  optimal coding for ~\ICDFA's.
\end{abstract}

{\textbf{Keyword:} regular languages, initially-connected deterministic finite
  automata, enumeration, random generation} 

\section{Introduction}

In general, the representation of combinatorial objects is decisive
for the feasibility of several enumerative tasks.  In this work, we
show how a (unique) string representation for (complete)
initially-connected deterministic automata (\ICDFA's) with $n$ states
over an alphabet of $k$ symbols can be used for counting, exact
enumeration, sampling and optimal coding, not only the set of
~\ICDFA's but, to some extent, the set of regular languages. The key
fact is that string representations are characterized by a set of
rules that allow an exact and ordered generation of all its elements.
An exact generation algorithm can be used to partition the set
of~\ICDFA's in order to parallelize the counting of minimal automata,
and thus of regular languages.  With the same set of rules it is
possible to design a uniform random generator for~\ICDFA's that uses a
table of pre-calculated values (as usual in combinatorial
decomposition approaches).  Based on the same table it is also
possible to obtain an optimal coding for~\ICDFA's (with or without
final states).

In the next section, some
definitions and notation are introduced. In Section~\ref{sec:str} we
review the string representation of non-isomorphic~\ICDFAE's
(i.e., \ICDFA's without final states), and how
it can be used to generate and enumerate all \ICDFA's. We also relate
those methods to the ones presented by Champarnaud and
Paranth\"oen in~\cite{champarnaud:_random_gener_dfas}, by giving a new
enumerative result. In Section~\ref{sec:gen}, we briefly describe the
implementation of a generator algorithm for \ICDFAE's.
Section~\ref{sec:count} presents the methods for parallelizing the
counting of languages by slicing the universe of~\ICDFAE's and some
experimental results are given. A uniform random generator for
\ICDFAE's is described in Section~\ref{sec:rand} along with some
experimental results and statistical tests. Using the recurrence
formulae defined in Section~\ref{sec:rand}, we show in
Section~\ref{sec:enum} how we can associate an integer with
an~\ICDFAE's and vice-versa. Section~\ref{sec:con} concludes with final
remarks.

\section{Preliminaries}
\label{sec:preliminares}
Given two integers $m<n$ we represent the set $\{i \in \mathbb{N}\mid
m\leq i \leq n\}$ by $[m,n]$.  A \emph{deterministic finite automaton}
(\DFA{}) ${\cal A}$ is a quintuple $(Q,\Sigma,\delta,q_0,F)$ where $Q$
is a finite set of states, $\Sigma$ the alphabet, i.e, a non-empty
finite set of symbols, $\delta: Q \times \Sigma \rightarrow Q$ is the
transition function, $q_0$ the initial state and $F\subseteq Q$ the
set of final states. The \emph{size of the automaton} is given by
$|Q|$. We assume that the transition function is total, so we consider
only \emph{complete} \DFA{}'s.
As we are not interested in the labels of the states, we can
represent them by an integer $i\in[0,|Q|-1]$.  The transition function
$\delta$ extends naturally to $\Sigma^\star$. \lixo{ for all $q\in Q$,
  if $x=\epsilon$ then $\delta(q,\epsilon)=q$; if $x=y\sigma$ then
  $\delta(q,x)=\delta(\delta(q,y),\sigma)$.}A \DFA{} is
\emph{initially-connected}\footnote{Also called \emph{accessible}.}
(\ICDFA{}) if for each state $q\in Q$ there exists a string $x\in
\Sigma^\star$ such that $\delta(q_0,x)=q$.  The \emph{structure} of an
automaton $(Q,\Sigma,\delta,q_0)$ denotes a \DFA{} without its final
state information and is referred to as a \DFAE.  For each structure,
there will be $2^n$ \DFA{}'s, if $|Q|=n$. We denote by \ICDFAE{} the
structure of an~\ICDFA{}. Two \DFA{}'s ${\cal
  A}=(Q,\Sigma,\delta,q_0,F)$ and ${\cal
  A}'=(Q',\Sigma,\delta',q_0',F')$ are called \emph{isomorphic} (by
states) if there exists a bijection $f:Q \rightarrow Q'$ such that
$f(q_0)=q_0'$ and for all $\sigma\in \Sigma$ and $q\in Q$,
$f(\delta(q,\sigma))=\delta'(f(q),\sigma)$.  Furthermore, for all
$q\in Q$, $q\in F$ if and only if $f(q)\in F'$.  The \emph{language}
accepted by a \DFA{} ${\cal A}$ is $L({\cal A})=\{x\in
\Sigma^\star\mid \delta(q_0,x)\in F\}$. Two \DFA{} are
\emph{equivalent} if they accept the same language.  Obviously, two
isomorphic automata are equivalent, but two non-isomorphic automata
may also be equivalent. A \DFA{}\ ${\cal A}$ is \textsl{minimal} if there
is no \DFA{}\ ${\cal A}'$ with fewer states equivalent to ${\cal A}$.
Trivially a minimal \DFA{} is an \ICDFA{}.  Minimal \DFA{}'s are
unique up to isomorphism.  Domaratzki et al.~\cite{domaratzki02} gave
some asymptotic estimates and explicit computations of the number of
distinct languages accepted by finite automata with $n$ states over an
alphabet of $k$ symbols.  Given $n$ and $k$, they denoted by $f_k(n)$
the number of pairwise non-isomorphic minimal \DFA's and by $g_k(n)$
the number of distinct languages accepted by \DFA's, where
$g_k(n)=\sum_{i=1}^n f_k(i).$

\section{Strings for \ICDFA's}
\label{sec:str}

Reis et al.~\cite{reis05:_repres_finit_dcfs} presented a unique
string representation for non-isomorphic \ICDFAE's. In this section,
we briefly review this representation and how it can be used to
generate and enumerate all \ICDFA s. We also give a new enumerative
result and relate this representation to the one presented by
Champarnaud and Paranth\"oen in~\cite{champarnaud:_random_gener_dfas}.

Given a complete \DFAE{} $(Q,\Sigma,\delta,q_0)$ with $|Q|=n$ and
$|\Sigma|=k$ , consider a total order $<$ over $\Sigma$. We can define a
canonical order over the set of the states by exploring the automaton
in a breadth-first way choosing at each node the outgoing edges in the
order considered for $\Sigma$. 
If we restrict this representation to \ICDFAE{}'s, then this
representation is unique and defines an order over the set of its
states. For instance, consider the following \ICDFAE{} and consider
the alphabetic order in $\{\mathtt{a},\mathtt{b},\mathtt{c}\}$.

\begin{center}
  \TinyPicture\VCDraw{
      \begin{VCPicture}{(0,2)(7,7)}
        \State[A]{(1,6)}{A}
        \State[C]{(6,6)}{C}
        \State[B]{(1,3)}{B}
        \State[D]{(6,3)}{D}
        \Initial{A}
         \LoopN{A}{\texttt{c}} 
         \ArcL{A}{C}{\texttt{a}}
         \ArcR{A}{B}{\texttt{b}}
         \ArcL{C}{A}{\texttt{c}}
         \ArcR{C}{D}{\texttt{b}}
         \EdgeL{C}{B}{\texttt{a}}
         \LoopE{D}{\texttt{b}} 
         \ArcL{D}{B}{\texttt{c}}
         \ArcR{D}{C}{\texttt{a}}
         \LoopW{B}{\texttt{c}} 
         \ArcR{B}{A}{\texttt{b}}
         \ArcL{B}{D}{\texttt{a}}
  \end{VCPicture}}
    \end{center}
    \noindent The states ordering is
    \texttt{A},\texttt{C},\texttt{B},\texttt{D} and
    $[1,2,0,2,3,0,3,0,2,1,3,2]$ is its string representation.
    Formally, let $\Sigma = \{\sigma_i\mid i\in [0,k-1]\}$, with
    $\sigma_0<\sigma_1<\cdots<\sigma_{k-1}$.  Given an \ICDFAE{}
    $(Q,\Sigma,\delta,q_0)$ with $|Q|=n$, the representing string is
    of the form $(s_i)_{i\in[0,kn-1]}$ with $ s_i \in [0,n-1]$ and
    $s_i=\delta(\floor{i/k},\sigma_{i\bmod{k}})$.

Let  $(s_i)_{i\in [0,kn-1]}$ with
$s_i\in[0,n-1]$ be a string satisfying the following conditions:
\begin{gather}
  (\forall m\in [2,n-1])(\forall i\in[0,kn-1])(s_i=m\;\Rightarrow\;(\exists
  j\in[0,i-1])\,s_j=m-1).\tag{\textbf{R1}}\label{eq:r1}\\
  (\forall m \in [1,n-1])(\exists
  j\in[0,km-1])\,s_j=m.\tag{\textbf{R2}}\label{eq:r2} 
\end{gather}

In~\cite{reis05:_repres_finit_dcfs} the following theorem was proved.

\begin{theorem}
  There is a one-to-one mapping between $(s_i)_{i\in[0,kn-1]}$ with
  $s_i \in [0, n-1]$ satisfying rules \ref{eq:r1} and \ref{eq:r2},
  and the non-isomorphic \ICDFAE{}'s with $n$ states, over an alphabet
  $\Sigma$ of size $k$.
\end{theorem}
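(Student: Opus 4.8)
The plan is to exhibit the bijection explicitly in both directions and then verify that the two maps are mutually inverse. In the forward direction, given an \ICDFAE{} I would fix its canonical breadth-first labelling: assign label $0$ to $q_0$, process the states in increasing label order, and whenever processing a state reveals, in the symbol order $\sigma_0 < \cdots < \sigma_{k-1}$, a target that has not yet been labelled, give that target the next unused label. Since position $i$ of the output string records $\delta(\floor{i/k},\sigma_{i\bmod{k}})$, the transitions of the state labelled $p$ occupy the contiguous block $[pk, pk+k-1]$; hence processing states in label order is the same as scanning the string left to right. This makes the labelling depend only on the transition structure and on the order of $\Sigma$, so it is constant on isomorphism classes and the string $s_i$ is well defined.

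Next I would check that this canonical string satisfies \ref{eq:r1} and \ref{eq:r2}. Writing $\mathrm{pos}(\ell)$ for the first position carrying the value $\ell$, the fact that new labels are handed out in increasing order as we scan left to right gives $\mathrm{pos}(1) < \mathrm{pos}(2) < \cdots < \mathrm{pos}(n-1)$. Any occurrence of a value $m \ge 2$ therefore sits at a position $\ge \mathrm{pos}(m) > \mathrm{pos}(m-1)$, so some earlier position carries $m-1$, which is \ref{eq:r1}. For \ref{eq:r2}, a state receiving label $m \ge 1$ is first discovered while processing an already-labelled state $p$, and because labels only increase we have $p \le m-1$; thus $\mathrm{pos}(m)$ lies in the block of $p$, giving $\mathrm{pos}(m) \le (m-1)k + (k-1) = km-1$ and a position $j \in [0, km-1]$ with $s_j = m$.

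For the backward direction, from a string satisfying \ref{eq:r1} and \ref{eq:r2} I would set $Q = [0,n-1]$, $q_0 = 0$, and $\delta(q,\sigma_j) = s_{qk+j}$. Since each $s_i \in [0,n-1]$ this is a total function, so we obtain a complete \DFAE{}. Initial-connectedness follows by induction on $m$: the state $0$ is the initial state, and if $0,\ldots,m-1$ are reachable then \ref{eq:r2} supplies a position $j \le km-1$ with $s_j = m$, whose source $\floor{j/k} \le m-1$ is reachable and maps to $m$; hence $m$ is reachable as well.

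Finally I would argue the two constructions are mutually inverse, and this is where the real content of the rules appears. Applying the forward map after the backward map, I must show that the canonical relabelling of the automaton just built is the identity. The point is that \ref{eq:r1} forces the first occurrences of $1,2,\ldots,n-1$ to appear in exactly that order — a first nonzero value $v \ge 2$ would, by \ref{eq:r1}, be preceded by $v-1 \ge 1$, a contradiction, and one iterates — while \ref{eq:r2} guarantees that each value does occur; so scanning left to right discovers the states in the order $1,2,\ldots,n-1$ and the relabelling fixes every state. In the other composition, applying the backward map after the forward map returns the original automaton up to the isomorphism furnished by the canonical labelling itself. The main obstacle is precisely this last step: showing that \ref{eq:r1} and \ref{eq:r2} characterize exactly the canonical strings — that every rule-abiding string is fixed by the relabelling, and that no two distinct rule-abiding strings arise from isomorphic automata — rather than merely checking that each construction lands in the right set.
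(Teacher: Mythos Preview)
Your proposal is correct and follows exactly the canonical breadth-first labelling that the paper sketches just before the theorem statement. Note, however, that the paper does not actually prove this theorem in the text --- it simply cites \cite{reis05:_repres_finit_dcfs} for the proof --- so your write-up supplies the argument that the present paper only references.
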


We note that this string representation can be extended to
non-complete \ICDFAE{}'s, by representing all missing transitions with
the value $-1$. In this case, rules \ref{eq:r1} and \ref{eq:r2} remain
valid, and we can assume that the transitions from this state are into
itself. However for enumeration and generation purposes we do not
consider non-complete \ICDFAE{}'s.

In order to have an algorithm for the enumeration and generation of
\ICDFAE's, instead of rules \ref{eq:r1} and \ref{eq:r2} an alternative
set of rules were used.  
For $n=1$ there is only one (non-isomorphic) \ICDFAE{} for each
$k\geq 1 $, so we assume in the following that $n>1$.
In a string representing an \ICDFAE{}, let
$(f_j)_{j\in[1,n-1]}$ be the sequence of indexes of the first
occurrence of each state label~$j$. For explanation purposes, we call
those indexes \emph{flags}.  
It is easy to see that  (\ref{eq:r1}) and  (\ref{eq:r2}) correspond
respectively to (\ref{eq:g1}) and (\ref{eq:g2}):
  \begin{gather}
    (\forall j\in [2,n-1])(f_j>f_{j-1});\tag{\textbf{G1}}\label{eq:g1}\\
    (\forall m \in  [1,n-1])\; (f_m< km).\tag{\textbf{G2}}\label{eq:g2}
  \end{gather}
This means that $f_1\in [0,k-1]$, and $f_{j-1}< f_{j}< kj$ for
$j\in[2,n-1]$. We begin by counting the number of sequences of flags allowed.
\begin{theorem}\label{thm:cat}
  Given $k$ and $n$, the number of sequences $(f_j)_{j\in[1,n-1]}$,
  ${F_{k,n}}$, is given by
    $$
  F_{k,n}= \sum_{f_1=0}^{k-1} \sum_{f_2=f_1+1}^{2k-1}
    \cdots \sum_{f_{n-1}=f_{n-2}+1}^{k(n-1)-1}
  1= \binom{kn}{n}\frac{1}{(k-1)n+1}=C_n^{(k)};$$
  where $C_n^{(k)}$ are the (generalised) Fuss-Catalan numbers.
\end{theorem}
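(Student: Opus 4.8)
The first equality in the statement is immediate: by the flag conditions (\ref{eq:g1}) and (\ref{eq:g2}) a valid flag sequence is precisely a tuple with $0\le f_1\le k-1$ and $f_{j-1}<f_j<kj$ for $j\in[2,n-1]$, so the nested sum literally enumerates these tuples. The content of the theorem is therefore the evaluation of that sum in closed form, and my plan is to recast it as a constrained lattice-path count and then invoke the cycle lemma. I would begin by removing the strict inequalities through the shift $a_j=f_j-(j-1)$. Since the $f_j$ are strictly increasing integers with $f_1\ge 0$, the $a_j$ form a weakly increasing sequence $0\le a_1\le a_2\le\cdots\le a_{n-1}$, and the bound $f_j<kj$ becomes $a_j\le (k-1)j$. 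Thus $F_{k,n}$ equals the number of weakly increasing integer sequences lying under the line $a_j=(k-1)j$.

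Next I would read such a sequence as the profile of a monotone lattice path from $(0,0)$ to $(n,(k-1)n)$ with unit east and north steps: let $a_j$ be the height reached just before the $(j{+}1)$-st east step. The constraint $a_j\le(k-1)j$ is then exactly the statement that the path never rises above the diagonal $y=(k-1)x$, and this correspondence is a bijection between valid flag sequences and such sub-diagonal paths. (A quick check on $n=1,2,3$ gives the values $1,\ k,\ k(3k-1)/2$, which indeed agree with $C_n^{(k)}$.) The problem is thereby reduced to counting monotone paths to $(n,(k-1)n)$ that stay weakly below a line of slope $k-1$, a standard Fuss--Catalan object.

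The remaining, and principal, step is to count these paths. My first choice is the Dvoretzky--Motzkin cycle lemma: I would encode a path as a cyclic word in its steps, normalized so that every step is bounded above and the total displacement is controlled, and then argue that within each cyclic orbit of the $\binom{kn}{n}$ step-arrangements exactly the fraction $\tfrac{1}{(k-1)n+1}$ are ``dominating'', i.e.\ stay below the line, yielding
\[
F_{k,n}=\frac{1}{(k-1)n+1}\binom{kn}{n}=C_n^{(k)}.
\]

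The hard part is getting this normalization exactly right: the endpoint $(n,(k-1)n)$ is not a coprime pair, so the plain rational cycle lemma does not apply verbatim, and one must introduce the customary extra step (the source of the ``$+1$'' in $(k-1)n+1$) and verify that each orbit contributes precisely one good rotation. If that bookkeeping proves delicate, I would fall back on an induction: a first-return decomposition of the sub-diagonal paths yields the $k$-ary branching recurrence, equivalently the functional equation $T=1+xT^k$ for the generating function $T(x)=\sum_{m\ge 0}C_m^{(k)}x^m$, and Lagrange inversion then extracts $[x^m]T=\frac{1}{(k-1)m+1}\binom{km}{m}$, whence $F_{k,n}=C_n^{(k)}$. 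A purely elementary evaluation of the nested sum by repeated application of the hockey-stick identity is also available, but it obscures the Fuss--Catalan structure. In all cases the genuinely new work is the reduction of the flag-sequence count to the lattice-path model in the first two steps; the final enumeration is a standard Fuss--Catalan identity.
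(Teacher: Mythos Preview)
Your argument is correct, but the paper takes a different and more structural route: instead of passing through lattice paths and the cycle lemma, it exhibits a direct bijection between flag sequences and $k$-ary trees with $n$ internal nodes. Encode such a tree by a length-$kn$ bitmap listing its non-root nodes in breadth-first, left-to-right order (a $1$ for each internal node, a $0$ for each external node); the positions of the $n-1$ ones are then exactly a sequence $(f_j)$ satisfying \ref{eq:g1} and \ref{eq:g2}, and conversely every flag sequence yields such a bitmap and hence a tree. The closed form follows from the classical count of $k$-ary trees. The paper's bijection has the virtue of tying the flags back to the breadth-first exploration that produced the canonical string in the first place, and it sidesteps entirely the coprimality/extra-step bookkeeping you (rightly) flag as the delicate point of the cycle-lemma argument. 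Your reduction, on the other hand, is more self-contained in that it actually \emph{derives} the Fuss--Catalan formula rather than quoting the tree count as known, and the sub-diagonal path picture makes the inequality constraints on the $f_j$ geometrically transparent.
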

\begin{proof}
  The first equality follows directly from the definition of the
  $(f_j)_{j\in[1,n-1]}$. For the second, note that $C_n^{(k)}$
  enumerates $k$-ary trees with $n$ internal nodes, ${\cal T}_n^k$
  (see for instance~\cite{sedgewick96:_analy_algor}).  In particular,
  for $k=2$, $C_n^2$ are exactly the Catalan numbers that count binary
  trees with $n$ internal nodes. This sequence appears in
  Sloane~\cite{sloane03:_encyc_integ_sequen} as \textbf{A00108} and
  for $k=3$ and $k=4$ as \textbf{A001764} and \textbf{A002293}
  sequences, respectively. So it suffices to give a bijection between
  these trees and the sequences of flags.  Recall that a $k$-ary tree
  is an external node or an internal node attached to an ordered
  sequence of $k$, $k$-ary sub-trees.  

  \begin{figure}[h]
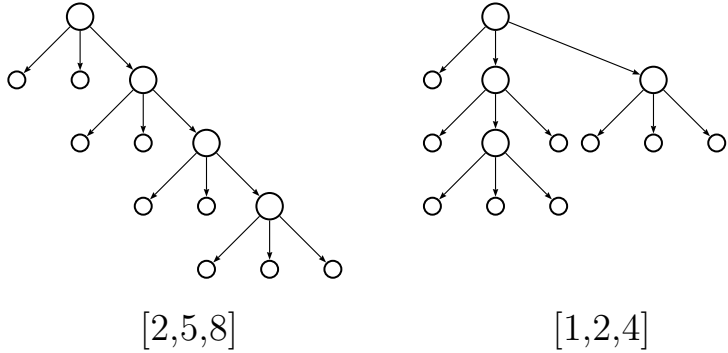
 
    \begin{center}
    \begin{tabular}{ccccc}
     \TinyPicture\VCDraw{
        \begin{VCPicture}{(-1, -10)(8, 0)}
          \MediumState \State{(0, -1)}{R} \SmallState \State{(-2, -3)}{A}
          \State{(0, -3)}{B} \MediumState \State{(2, -3)}{C} \Edge{R}{A}
          \Edge{R}{B} \Edge{R}{C} \SmallState \State{(0, -5)}{D} \State{(2,
            -5)}{E} \MediumState \State{(4, -5)}{F} \Edge{C}{D} \Edge{C}{E}
          \Edge{C}{F} \SmallState \State{(2, -7)}{G} \State{(4, -7)}{H}
          \MediumState \State{(6, -7)}{I} \Edge{F}{G} \Edge{F}{H} \Edge{F}{I}
          \SmallState \State{(4, -9)}{J} \State{(6, -9)}{K} \State{(8, -9)}{L}
          \Edge{I}{J} \Edge{I}{K} \Edge{I}{L}
        \end{VCPicture}}
&&&&
     \TinyPicture\VCDraw{
\begin{VCPicture}{(-1,-10)(8, 0)}
          \MediumState \State{(0, -1)}{R}
          \SmallState \State{(-2, -3)}{A}
          \MediumState \State{(0, -3)}{B} 
          \State{(5, -3)}{C} \Edge{R}{A}
          \Edge{R}{B} \Edge{R}{C} 

          \SmallState \State{(-2, -5)}{D} 
          \MediumState \State{(0,-5)}{E} \SmallState \State{(2, -5)}{F} 
          \Edge{B}{D} 
          \Edge{B}{E}
          \Edge{B}{F} 
          \SmallState \State{(3, -5)}{G} \State{(5, -5)}{H}
          \State{(7, -5)}{I} 
          \Edge{C}{G} \Edge{C}{H} \Edge{C}{I}
          \SmallState \State{(-2, -7)}{J} \State{(0, -7)}{K} \State{(2, -7)}{L}
          \Edge{E}{J} \Edge{E}{K} \Edge{E}{L}
        \end{VCPicture}

}\\
{\Large[2,5,8]}&&&&{\Large[1,2,4]}
\vspace{-0.5cm}
    \end{tabular}
    \end{center}
    \caption{Two $3$-ary trees with 4 internal nodes and the correspondent 
       sequence of flags.}
    \label{fig:tree}
  \end{figure}
  Let ${\cal T}_n^k$ be a $k$-ary tree and let $<$ be a total order
  over $\Sigma$. For each internal node $i$ of ${\cal T}_n^k$ its
  outgoing edges can be ordered left-to-right and attached a unique
  symbol of $\Sigma$ according to $<$.  Considering a breadth-first,
  left-to-right, traversal of the tree and ignoring the root node
  (that is considered the $0$-th internal node), we can represent
  ${\cal T}_n^k$, uniquely, by a bitmap where a $0$ represents an
  external node and a $1$ represents an internal node.  As the number
  of external nodes are $(k-1)n+1$, the length of the bitmap is $kn$.
  Moreover the $j+1$-th block of $k$ bits corresponds to the children
  of the $j$-th internal node visited, for $j\in [0,n-1]$. For
  example, the bitmaps of the trees in Figure~\ref{fig:tree} are
  $[0,0,1,0,0,1,0,0,1,0,0,0]$ and $[0,1,1,0,1,0,0,0,0,0,0,0]$,
  respectively. The positions of the $1$'s in the bitmaps correspond
  to a sequence of flags, $(f_i)_{i\in[1,n-1]}$, i.e., $f_i$
  corresponds to the number of nodes visited before the $i$-th
  internal node (excluding the root node). It is obvious that
  $(f_i)_{i\in[1,n-1]}$ verify~\ref{eq:g1}.  For~\ref{eq:g2}, note
  that for the each internal node the outdegree of the previous internal
  nodes is $k$. Conversely, given a sequence of flags $(f_j)_{j\in[1,n-1]}$, we
  construct the bitmap such that $b_{f_i}$=1 for $i \in [1,n-1]$ and
  $b_j=0$ for the remaining values, for $j\in[0,kn-1]$. As above, for
  the representation of the $j+1$-th internal node, $\floor{f_j/k}$
  gives the parent and ${f_j\bmod k}$ gives its position between its
  siblings (in breadth-first, left-to-right traversal).
\end{proof}
To generate all the \ICDFAE's, for each allowed sequence of flags
$(f_j)_{j\in[1,n-1]}$,  all the remaining symbols $s_i$
can be generated according to the following
rules:
\begin{gather}
  i<f_1\;\Rightarrow\;s_i=0;\tag{\textbf{G3}}\label{eq:g3}\\
  (\forall j \in [1,n-2]) (f_j<i<f_{j+1} \;\Rightarrow s_i\in [0,j]);
  \tag{\textbf{G4}}\label{eq:g4}\\
  i>f_{n-1}\;\Rightarrow\; s_i\in [0,n-1].\tag{\textbf{G5}}\label{eq:g5}
\end{gather}

In~\cite{reis05:_repres_finit_dcfs} a simple combinatorial argument
was given to show that
\begin{theorem}
  The number of strings $(s_i)_{i\in[0,kn-1]}$ representing \ICDFAE{}'s with
  $n$ states over an alphabet of $k$ symbols is given by
  \begin{equation}
    \label{eq:Bk}
    B_{k,n}= \sum_{f_1=0}^{k-1} \sum_{f_2=f_1+1}^{2k-1}
    \sum_{f_3=f_2+1}^{3k-1} \cdots \sum_{f_{n-1}=f_{n-2}+1}^{k(n-1)-1}
    \prod_{i=2}^{n} i^{f_i-f_{i-1}-1};
  \end{equation}
  where $f_n=kn$.
\end{theorem}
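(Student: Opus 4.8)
The plan is to partition the set of valid strings by their flag sequence and count, for each admissible sequence, the number of legal completions. By the correspondence of the first theorem (and the equivalence of \ref{eq:r1}, \ref{eq:r2} with the flag conditions \ref{eq:g1}, \ref{eq:g2} together with the completion rules \ref{eq:g3}--\ref{eq:g5}), a string representing an \ICDFAE{} is determined by two pieces of data: its flag sequence $(f_j)_{j\in[1,n-1]}$, recording the first occurrence of each label, and the values at the non-flag positions, constrained by \ref{eq:g3}--\ref{eq:g5}. So $B_{k,n}=\sum_{(f_j)} N(f_1,\dots,f_{n-1})$, where the sum is over admissible flag sequences and $N$ counts completions.

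First I would fix an admissible flag sequence and evaluate $N(f_1,\dots,f_{n-1})$. The positions $[0,kn-1]$ break into consecutive bands separated by the flags, with the flag positions themselves carrying the forced values $s_{f_j}=j$. The initial band $i<f_1$ is forced to $0$ by \ref{eq:g3}, contributing a factor $1$. By \ref{eq:g4}, a position strictly between $f_{i-1}$ and $f_i$ may take any of the $i$ values $\{0,\dots,i-1\}$, and there are exactly $f_i-f_{i-1}-1$ such positions, so this band contributes $i^{f_i-f_{i-1}-1}$. Putting $f_n=kn$, rule \ref{eq:g5} makes the final band fit the same template with $i=n$. Since the legal values in a band depend only on which labels $0,\dots,i-1$ are already forced to have appeared, the choices within and across bands are mutually independent, giving $N(f_1,\dots,f_{n-1})=\prod_{i=2}^{n} i^{f_i-f_{i-1}-1}$.

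Finally I would range over admissible flag sequences. By \ref{eq:g1} and \ref{eq:g2} these are exactly the sequences with $f_1\in[0,k-1]$ and $f_{j-1}<f_j<kj$, which are enumerated by the nested sums $\sum_{f_1=0}^{k-1}\sum_{f_2=f_1+1}^{2k-1}\cdots\sum_{f_{n-1}=f_{n-2}+1}^{k(n-1)-1}$. Substituting the per-sequence count into this sum yields the stated formula for $B_{k,n}$.

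The main obstacle is the independence and exhaustiveness claim used to obtain $N$: one must check that every assignment obeying \ref{eq:g3}--\ref{eq:g5} yields a genuine valid string (equivalently, one satisfying \ref{eq:r1} and \ref{eq:r2}) with precisely the chosen flags, and that distinct assignments give distinct strings. This amounts to verifying that inserting any value of $\{0,\dots,i-1\}$ into a band neither creates a premature first occurrence of a larger label, so that the flags are preserved, nor violates \ref{eq:r1}, since each of $0,\dots,i-1$ has already occurred before the band starts. Once this bookkeeping is settled, the factorization of $N$ and the nested summation follow immediately.
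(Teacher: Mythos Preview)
Your argument is correct and is precisely the ``simple combinatorial argument'' the paper alludes to (it cites \cite{reis05:_repres_finit_dcfs} rather than spelling out a proof): partition by flag sequence, count the free choices in each band via \ref{eq:g3}--\ref{eq:g5}, and sum over the admissible sequences given by \ref{eq:g1}--\ref{eq:g2}. Your final paragraph correctly identifies the only point needing care, namely that band entries in $[0,i-1]$ never disturb the flags or violate \ref{eq:r1}, and this is immediate since all labels $0,\dots,i-1$ have already appeared by position $f_{i-1}$.
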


In Section~\ref{sec:rand} we give other recursive definition that
is more adequate for tabulation.

\subsection{Analysis of the Champarnaud et al. Method}

Champarnaud and
Parantho\"en in~\cite{champarnaud:_random_gener_dfas,paranthoen04},
generalizing work of Nicaud~\cite{nicaud00}
presented a method to generate and enumerate \ICDFAE's, although not
giving an explicit and compact representation for them, as the string
representation used here.  An order $<$ over $\Sigma^\star$ is a
\emph{prefix order} if $(\forall x \in \Sigma^\star)(\forall \sigma\in
\Sigma) x < x\sigma$.  Let ${\cal A}$ be an~\ICDFAE{} over $\Sigma$
with $k$ symbols and $n$ states. Given a prefix order in
$\Sigma^\star$, each automaton state is ordered according to the first
word $x\in \Sigma^\star$ that reaches it in a simple path from the
initial state. The sets of this words $\{\cal P\}$  are in bijection with $k$-ary
trees with $n$ internal nodes, and therefore to the set of sequences
of flags, in our representation\footnote{Indeed our order on the
  states induces a prefix order in $\Sigma^\star$.}. Then it is possible
to obtain a valid \ICDFAE{} by adding other transitions in a way that preserves
the previous state labelling. For the generation of the sets ${\cal
  P}$ it is used another set of objects that are in bijection with
  $k$-ary trees with $n$ internal nodes and are called generalised tuples.
The number of
 \ICDFAE{}'s is  computed using recursive formulae associated with
 generalized tuples, akin the ones we present in
 Section~\ref{sec:rand}.
\section{Generating  \ICDFAE's}
\label{sec:gen}

In this section, we present a method to generate all \ICDFAE's, given
$k$ and $n$. We start with an initial string, and then consecutively
iterate over all allowed strings until the last one is reached. The
main procedure is the one that given a string returns the
\emph{next} legal one. For each $k$ and $n$, the first \ICDFAE{} is
represented by the string $0^{k-1}10^{k-1}\ldots(n-1)0^{k}$ and the
last is represented by $12\ldots(n-1)(n-1)^{(k-1)n+1}$.  According to
the rules~\ref{eq:g1}-~\ref{eq:g5}, we first generate a sequence of
flags, and then, for each one, the set of strings representing the \ICDFAE's
in lexicographic order. The  algorithm to generate the next sequence of
flags is the following, where the initial sequence
of flags is $(ki-1)_{i\in[1,n-1]}$:
\begin{lstlisting}
def $\; \mathbf{nextflags}(i)$:
  if $i$==$1$ then $f_i$ =  $f_i$ - 1
  else 
     if ($f_i$-1 == $f_{i-1}$) then  
        $f_i$ = $k*i-1$
        $\mathbf{nextflags}(i-1)$
     else $f_{i}=f_{i}-1$
\end{lstlisting}
To generate a new sequence, we must call~\textbf{nextflags(n-1)}.
Given the rules~\ref{eq:g1} and~\ref{eq:g2} the correctness of the
algorithm is easily proved. When a new sequence of flags is generated,
the first \ICDFAE{} is represented by a string with $0$s in all other
positions (i.e., the lower bounds in rules~\ref{eq:g3}--\ref{eq:g5}).
The following  strings, with the same sequence of flags, are computed
lexicographically using the procedure \textbf{nexticdfa}, called with $a=n-1$ and
$b=k-1$:
\begin{lstlisting}
def $\mathbf{nexticdfa}(a,b)$:
   $i$ = $a*k + b$
        if $a$ < $n-1$ then
            while $i \in (f_j)_{j\in[1,n-1]}$:
               for $k=i+1$ to $kn-1$:
                   if $k \notin (f_j)_{j\in[1,n-1]}$ then $s_k=0$
               $b$ = $b$ - $1$
               $i$ = $i$ - $1$
        $f_j$ = the nearest flag not exceeding $i$       
        if $s_i == s_{f_j}$ then 
           $s_i= 0$
           if $b == 0$ then $\mathbf{nexticdfa}(a-1, k-1)$
           else $\mathbf{nexticdfa}(a, b-1)$
        else $s_i$ = $s_i$ + 1 
\end{lstlisting}
 
Note that the last string for each sequence of flags has the
value $s_l=j$ for $l\in[f_j+1,f_{j+1}-1]$, with $j\in[1,n-1]$. The
time complexity of the generator is linear in the number of automata.
As an example, for $k=2$ and $n=9$ it took about $12$ hours to
generate all the $705068085303$ \ICDFAE's, using a AMD Athlon at
2.5GHz. Finally, for the generation of~\ICDFA's
we only need to add to the string representation of an~\ICDFAE, a
string of $n$ $0$'s and $1$'s, correspondent to one of the $2^n$
possible choices of final states.

\section{Counting Regular Languages (in Slices)}
\label{sec:count}
To obtain the number of languages accepted by \DFA's with $n$ states
over an alphabet of $k$ symbols, we can generate all~\ICDFA's,
determine which of them are minimal ($f_k(n)$) and calculate the value
of
$g_k(n)$.
 Obviously, this is in general an intractable procedure.  But for
 small values of $n$ and $k$ some experiments can take place.  We must
 have an efficient implementation of a minimization algorithm, not
 because of the size of each automaton but because the number of
 automata we need to cope with. For that we implemented Hopcroft's
 minimization algorithm~~\cite{hopcroft71}, using efficient set
 representations. For very small values of $n$ and $k$ ($n+k<16$) we
 represented sets as bitmaps and for larger values,
 AVL trees~\cite{gnu_libav_binar_searc_trees_librar} were used.

 The problem can be parallelized providing that the space search can
 be safely partitioned. Using the method presented in
 Section~\ref{sec:gen}, we can easily generate \emph{slices} of
 ~\ICDFAE's and feed them to the minimization algorithm. A
 \emph{slice} is a sequence of ~\ICDFAE's and is defined by a pair
 $(start,last)$, where $start$ is the first automaton in the sequence
 and $last$ is the last one.
If we have a set of CPUs available, each one can receive a slice,
generate all~\ICDFAE's (in that slice), generate all the
necessary~\ICDFA's and feed them to the
minimization algorithm.  For the generation of \ICDFA's, we used
the observation by Domaratzki \textit{et al.}~\cite{domaratzki02}, that is enough to test $2^{n-1}$
sets of final states, using the fact that a \DFA{} is minimal
\textit{iff} its complementary automaton is minimal too.
In this way, we can safely divide the search space and distribute each
slice to a different CPU.  Note that this approach relies in the
assumption that we have a much more efficient way to partition the
search space than to actually perform the search (in this case a
minimization algorithm).  The task of creating the slices
can be taken by a central process that successively generates the next
slice and at the end assembles all the results.  The \emph{server} can
run interactively with its \emph{slaves}, or it can generate all the
slices at once to be used later.  The server generates a \emph{slice}
using the generator algorithm presented in Section~\ref{sec:gen}. 
For this experiment we used two approaches. We developed a simple
\textnormal{slave management system} -- called \texttt{Hydra} ---
based on \texttt{Python} threads, that was composed by a server and a
variable set of \emph{slaves}. In this case, the slaves can be any
computer\footnote{We used all the normal desktop computers of our  colleagues in
  the CS Department.}. For each slice a process was executed via
\texttt{ssh}, and the result was returned to the server.  Another
approach was to use a computer grid, in particular 24 AMD Opteron 250 2.4GHz (dual
core).

\subsection{Experimental results}
\label{sec:expr}
In Table~\ref{tab:minimal}, we summarise some experimental
results.  Most of the values for $k=2$ and $k=3$, were already given
by Domaratzki {\em et al.} in~\cite{domaratzki02} and  the new
results are in bold in the table.
For $k=2, n=8$ we have
divided the universe of \ICDFAE's in 254 slices and the estimated CPU
time for each one to be processed is 11 days.
\begin{table}
  \centering
{\scriptsize\begin{tabular}{|c|c|l|l|l|l|l|} \hline
&n&\ICDFAE&\ICDFA&Minimal ($f_k(n)$)&Minimal \%&Time (s)\\\hline
$k=2$&2&12&48&24&50\%&0
\\&3&216&1728&1028&59\%&0.018
\\&4&5248&83968&56014&66\%&0.99
\\&5&160675&5141600&3705306&72\%&79.12
\\&6&5931540&379618560&286717796&75\%&8700
\\&7&256182290&32791333120&\textbf{25493886852}&77\%&1237313
\\\hline$k=3$&2&56&224&112&50\%&0.002
\\&3&7965&63720&41928&65\%&0.7
\\&4&2128064&34049024&26617614&78\%&494.72
\\&5&914929500&29277744000&\textbf{25184560134}&86\%&652703
\\\hline$k=4 $&2&240&960&480&50\%&0.01
\\&3&243000&1944000&1352732&69\%&23.5
\\&4&642959360&10287349760&\textbf{7756763336}&75\%&184808
\\\hline$k=5$&2&992&3968&1984&50\%&0.041
\\&3&6903873&55230984&36818904&66\%&756.2\\\hline
\end{tabular}}
   \caption{Performance and number of minimal automata.}
  \label{tab:minimal}
\end{table}

Moreover, the slicing process can give new insights about the
distribution of minimal automata.  Figure~\ref{fig:minimalp} presents
two examples of the values obtained for the rate of minimal
\DFA{}'s. For $n=7$ and $k=2$ we give the percentage of minimal
automata for each of the 257 slices we had used to divide the search
space (32791333120 \ICDFAE's). Each slice had about 100000 \ICDFAE's,
and so 128000000~\ICDFA's, and it took about 78 minutes to conclude
the process. The whole set of automata was
processed in 12 hours of real time of a CPU grid, that corresponds to
344 hours of CPU time.
\begin{figure}[h]
  \begin{center}
    \begin{tabular}[c]{cc}
    \includegraphics[angle=-90,width=7cm]{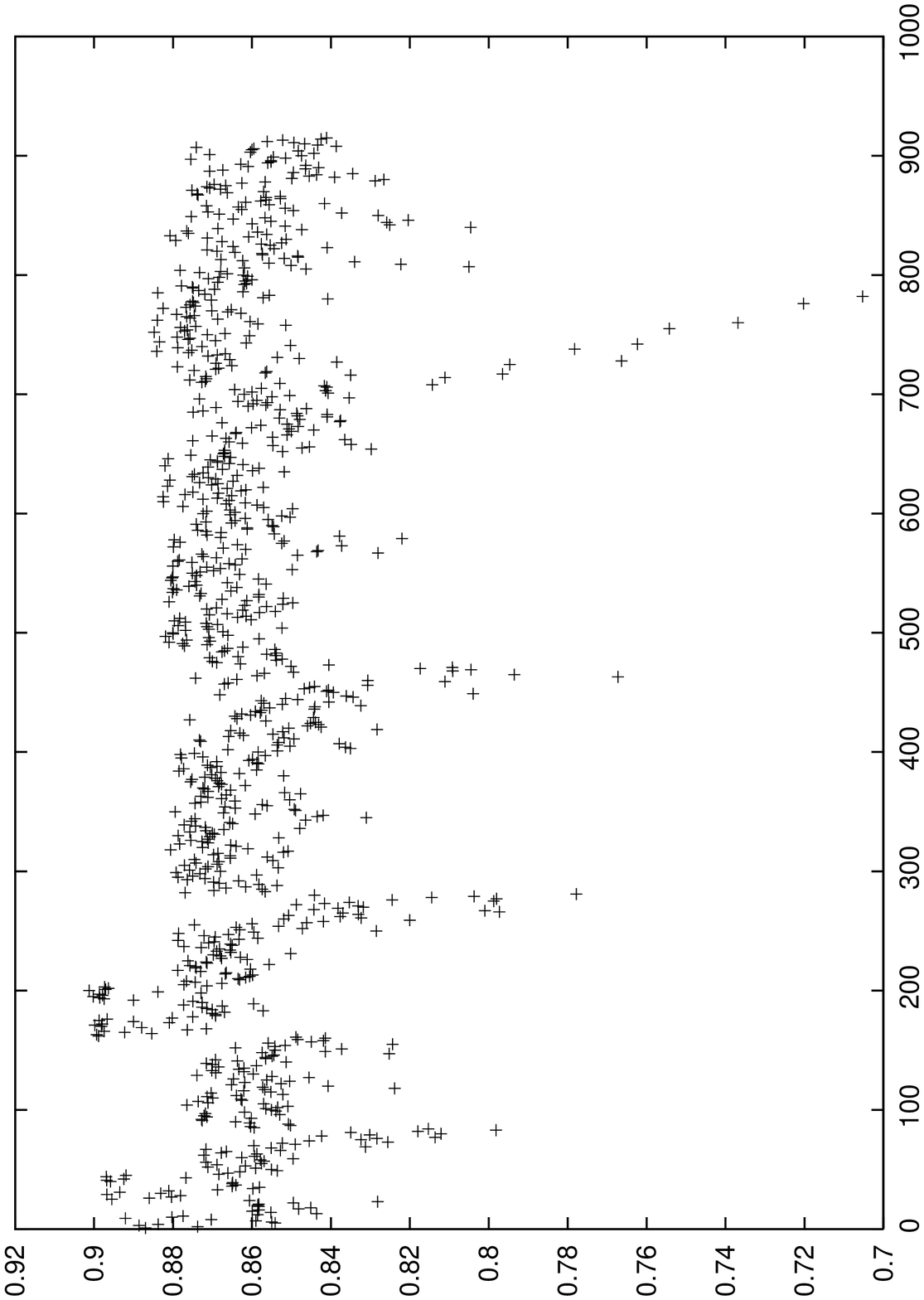}&
    \includegraphics[angle=-90,width=7cm]{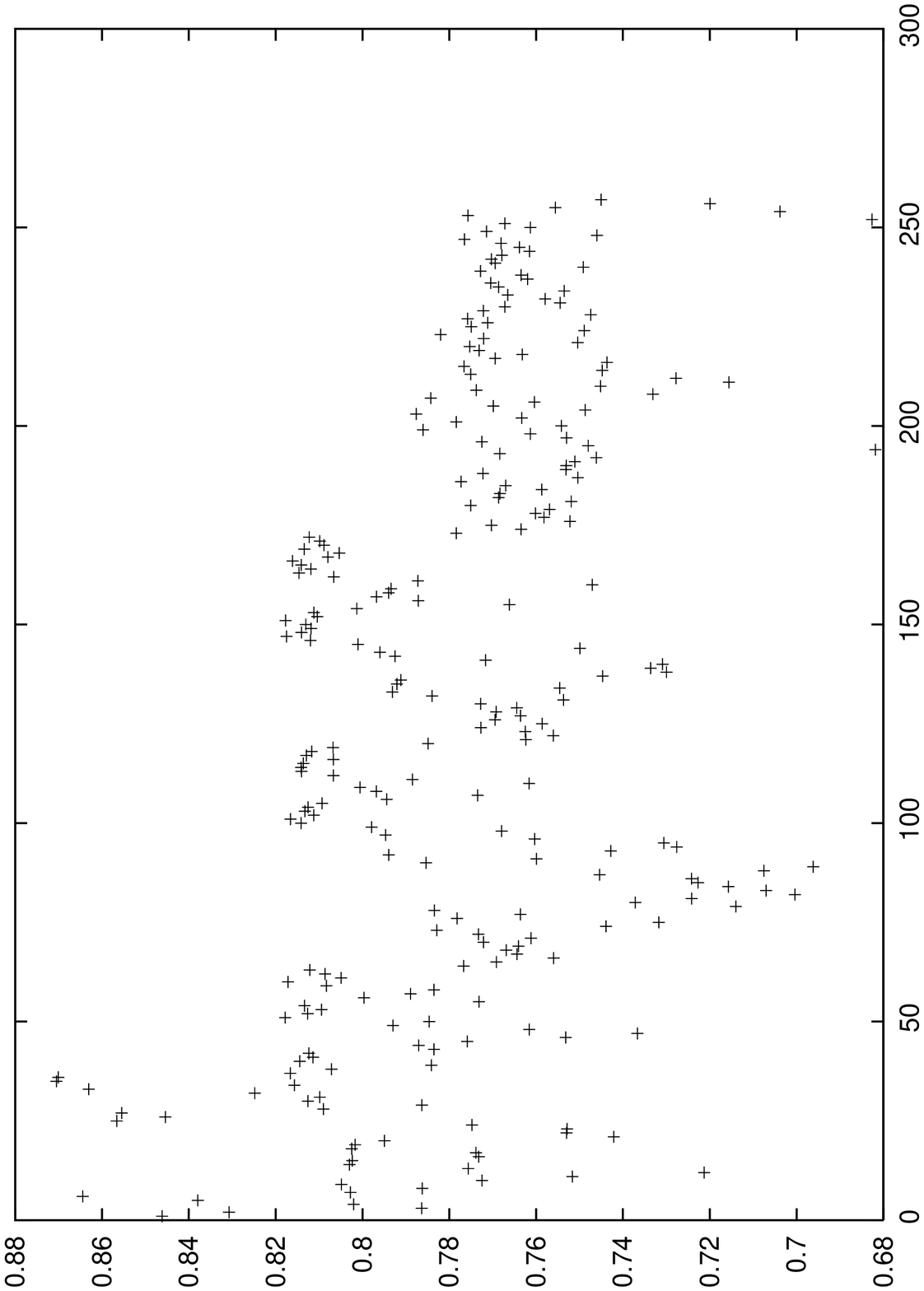}
    \end{tabular}
    \caption{Rate of minimal \DFA{}'s with ($k=3$,$n=5$) for $915$ slices
      and with ($k=2$,$n=7$) for $257$ slices.}
    \label{fig:minimalp}
  \end{center}
\end{figure}

\section{A Uniform Random Generator}
\label{sec:rand}
The \ICDFAE{} representation presented (Section \ref{sec:str}) permits
an easy random generation for \ICDFA s, and thus for \DFA{}s. To
randomly generate a \DFA{} for a given $n$ and $k$, it is only
necessary to: (i) randomly generate a valid sequence of flags
$(f_i)_{i\in[1,n-1]}$ according to \ref{eq:g1} and \ref{eq:g2}; (ii)
followed by the random generation of the rest of the $nk$ elements of
the string following \ref{eq:g3}--\ref{eq:g5} rules; (iii) and finally
the random generation of the set of final states. The uniformity issue
for steps (ii) and (iii) is quite straightforward.  For step (iii) it
is just necessary to use a uniform random integer generator for a
value $i\in [0,2^n]$. It is enough, for step (ii) the repeated use of
the same number generator for values in the range $[0,i]$ for $0\leq
i<n$ according to rules \ref{eq:g3}--\ref{eq:g5}. Step (i) is the only step
that needs special care.  Consider the case $n=5$ and $k=2$. Because
of rule \ref{eq:r1} flag $f_1$ can only be on positions $0$ or
$1$. But there are $140450$ \ICDFAE{}'s with $f_1$ in the first case
and only $20225$ in the second. Thus the random generation of flags,
to be uniform, must take this into account by making the first case more
probable than the second.  We can generate a random \ICDFAE{}
generating its representing string from left to right. Supposing that
flag $f_{m-1}$ is already placed at position $i$ and all the symbols
to its left are generated, i.e., the prefix $s_0s_1\cdots s_i$
is already defined, then the process can be described by:
\begin{lstlisting}
$r = \mathbf{random}(1,\sum\limits_{j=i+1}^{mk-1}N_{m,j})$
for $j=i+1$ to $mk-1$:
  if $r \in \left[\sum\limits_{l=i}^{j-1}N_{m,l},
    \sum\limits_{l=i}^{j}N_{m,l}\right]$ then return $i$
\end{lstlisting}
\noindent where \texttt{random(a,b)} is an uniform random generated
integer between \texttt{a} and \texttt{b}, and $N_{m,j}$ is the number
of \ICDFAE{}s with prefix $s_0s_1\cdots s_i$ with the first occurrence
of symbol $m$ in position $j$, making $N_{m,i}=0$ to simplify the
expressions. The values for $N_{m,j}$ could be obtained from expressions similar to
Equation~(\ref{eq:Bk}), and used in a program. But the program would
have a exponential time complexity. By expressing $N_{m,j}$ in a
recursive form, we  have, given $k$ and  $n$
\begin{equation}\label{eq:nij}
 \begin{array}{rcll}
  N_{n-1,j} &=& n^{nk-1-j}&\text{\;\;with\;}j\in[n-2,(n-1)k-1];\\
  N_{m,j} &=& \sum\limits_{i=0}^{(m+1)k-j-2}(m+1)^iN_{m+1,j+i+1}&\text{\;\;with\;}
  m\in[1,n-2],\\
  &&&j\in[m-1,mk-1].
\end{array}
\end{equation}
This evidences the fact that we keep repeating the same
computations with very small variations, and thus, if we use some kind
of tabulation of this values ($N_{m,j}$), with the
obvious price of memory space, we can create a version of a uniform
random generator, that apart of a constant overhead used for tabulation
of the function refered, has a complexity of
$\mathcal{O}(n^3k)\mathcal{O}(\mathtt{random})$. The algorithm is
described by the following:
\vspace{.2cm}

\begin{tabular}[c]{l|l}
\begin{lstlisting}
for $i=(n-1)k-1$ downto $n-2$:
    $N_{n-1,i} = n^{nk-1-i}$
for $m=n-2$ downto $1$:
    $N_{m,mk+1} = \sum\limits_{i=0}^{k-1} (m+1)^iN_{m+1,mk+i}$
    for $i=mk-2$ downto $m-1$:
        $N_{m,i} = (m+1)N_{m,i+1}+N_{m+1,i+1}$
$g = -1$
for $i=1$ to $n-1$:
    $f = \mathbf{generateflag}(i,g+1)$
    for $j=g+1$ to $f-1$:
        print $\mathbf{random}(0,i-1)$
     print $i$
     $g=f$
\end{lstlisting}&
\begin{lstlisting}
def $\mathbf{generateflag}(m,l)$:
    $r = \mathbf{random}$(0,$\sum\limits_{i=l}^{mk-1} m^{i-l}N_{m,i}$)
    for $i=l$ to $mk-1$:
        if $r < m^{i-l}N_{m,i}$ 
        then return $i$
        else $r = r - m^{i-l}N_{m,i}$
\end{lstlisting}
\end{tabular}\vspace{.2cm}

\noindent This means that with the same  AMD Athlon 64 at 2.5GHz, using a C
implementation with \texttt{libgmp}
\cite{gnu_multi_precis_arith_librar} the times reported in Table
\ref{tab:tempsrnd} were observed.
\begin{table}[ht]
  \centering
{\scriptsize  \begin{tabular}{|c|c|c|c|c|c|}
    \hline &$k=2$&$k=3$&$k=5$&$k=10$&$k=15$\\\hline
    $n=10$ &$0.10$s&$0.16$s&$0.29$s&$0.61$s&$1.30$s\\
    $n=20$ &$0.31$s&$0.49$s&$1.26$s&$4.90$s&$12.24$s\\
    $n=30$ &$0.54$s&$1.37$s&$3.19$s&$19.91$s&$62.12$\\
    $n=50$ &$1.61$s&$3.86$s&$17.58$s&$2.22$m&$947.71$s\\
    $n=75$ &$3.96$s&$12.98$s&$76.69$s&$700.20$s&$2459.34$s\\
    $n=100$ &$7.92$s&$36.33$s&$215.32$s&$2219.04$s&$8091.30$s\\
    \hline
  \end{tabular}}\vspace{.2cm}
  \caption{Times for the random generation of $10000$ automata.}
  \label{tab:tempsrnd}
\end{table}
It is possible, without unreasonable amounts of RAM to generate
random automata for unusually large values of $n$ and $k$. For example,
with $n=1000$ and $k=2$ the memory necessary is less than $450$MB. The
amount of memory used is so large not only because of the amount of
tabulated values, but because the size of the values is enormous. To
understand that, it is enough to note that the total number of
\ICDFAE{}'s for these values of $n$ and $k$ is greater than
$10^{3350}$, and the values tabulated are only bounded by this number.

\subsection{Statistical test of the random generator}
\label{sec:stat-test-rand}

Although the method used to generate random automata is, by its own
construction, uniform, we used $\chi^2$ test to evaluate the random generation
quality. The universe of \ICDFAE{}'s with $6$ states and $2$
symbols has a total size of $5931540$. This size is large enough for a
test with some significance and it is still reasonable, both in time
and space, to  perform the test. We generated three different sets of
$3000000$ \ICDFAE{}'s and perform the test in each one. Because of
the size of the data, we could not find  any tabulated values for
acceptance, and thus the following formula was used with $v=30000000-1$
and $x_p$ being the significance level ($1\%$ in this case):
$$v + 2 \sqrt{vx_p} + \frac{3}{4}x_p^2 - \frac{2}{3}.$$
The size of the data sets and the repetition of the test for three
times, is the recommended procedure by Knuth (\cite{knuth81:_art_comput_progr2},
pages 35--39). For the three experiments the values obtained 
were, respectively,  $5933268.92456$, $5925676.75108$ and $5935733.28172$,
  that are all smaller than the acceptance limit, that for this case
  was $5938980.75468$.
  
\section{Enumeration of \ICDFAE's}
\label{sec:enum}
In this section, we show how, given a string representation of an
\ICDFAE's of size $n$ over an alphabet of $k$ symbols, we can compute
its number in the generation order (described in
Section~\ref{sec:gen}) and vice-versa, i.e., given a number less than
$B_{k,n}$, we obtain the corresponding \ICDFAE. This provides an
optimal encoding for \ICDFAE's, as defined by M. Lothaire in
\cite{lothaire05:_applied_combin_words}, Chapter 9.  This bijection is
accomplished using the tables defined in Section~\ref{sec:rand} that
correspond to partial sums of Equation~(\ref{eq:Bk}).

\begin{theorem} 
  $B_{k,n}=\sum\limits_{l=0}^{k-1}N_{1,l}$.
\end{theorem}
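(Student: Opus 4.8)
The plan is to give a combinatorial meaning to each $N_{m,j}$ and then partition the set of all valid representing strings according to the position of the first flag $f_1$. Concretely, I claim that $N_{m,j}$ counts the number of ways to complete a representing string once the flag $f_m$ has been fixed at position $j$: that is, the number of admissible choices for the later flags $f_{m+1},\ldots,f_{n-1}$ (subject to \ref{eq:g1} and \ref{eq:g2}), together with all the free symbols occupying the non-flag positions from $j+1$ to $kn-1$ (subject to \ref{eq:g3}--\ref{eq:g5}). Granting this interpretation, the theorem is immediate: by \ref{eq:g2} the flag $f_1$ must lie in $[0,k-1]$, and by \ref{eq:g3} fixing $f_1=l$ forces the prefix $s_0\cdots s_{l-1}$ to be all zeros with $s_l=1$; hence the strings with $f_1=l$ number exactly $N_{1,l}$, and summing the disjoint cases $l\in[0,k-1]$ yields $B_{k,n}=\sum_{l=0}^{k-1}N_{1,l}$.

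The substance is therefore to verify that the recurrence \ref{eq:nij} really does compute this ``number of completions'', which I would do by downward induction on $m$, from $m=n-1$ to $m=1$. For the base case $m=n-1$: once $f_{n-1}=j$ is fixed there are no further flags, and by \ref{eq:g5} each of the positions $j+1,\ldots,kn-1$ (there are $kn-1-j$ of them) may independently take any of the $n$ values in $[0,n-1]$, giving $n^{kn-1-j}$ completions, exactly the first line of \ref{eq:nij}. For the induction step I would condition on the position of the next flag, writing $f_{m+1}=j+i+1$ with $i\ge 0$; rule \ref{eq:g1} forces $f_{m+1}>f_m=j$, while \ref{eq:g2} forces $f_{m+1}<(m+1)k$, which is precisely the range $0\le i\le (m+1)k-j-2$ of the sum. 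The $i$ non-flag positions strictly between $f_m$ and $f_{m+1}$ lie in a block where, by \ref{eq:g4}, each symbol ranges over $[0,m]$, contributing a factor $(m+1)^i$; the remaining completions from $f_{m+1}$ onward are counted, by the induction hypothesis, by $N_{m+1,j+i+1}$. Multiplying and summing over $i$ reproduces the second line of \ref{eq:nij}.

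The main obstacle is purely bookkeeping: keeping the index ranges and the off-by-one counts consistent --- in particular that the block strictly between consecutive flags $f_m$ and $f_{m+1}$ has length $f_{m+1}-f_m-1$, that the admissible symbol set in that block has size $m+1$, and that the upper limit of the summation in \ref{eq:nij} is exactly the translation of the constraint $f_{m+1}<(m+1)k$ coming from \ref{eq:g2}. Once these are pinned down the argument is routine, and as a sanity check one can confirm that expanding $\sum_{l=0}^{k-1}N_{1,l}$ by repeatedly unfolding \ref{eq:nij} reproduces the nested sum-of-products of Equation~\ref{eq:Bk}, with the convention $f_n=kn$.
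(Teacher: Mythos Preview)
Your proof is correct. The paper's own proof is the one-line remark ``the result follows easily by expanding $N_{m,j}$ using Equations~(\ref{eq:nij}) and Equation~(\ref{eq:Bk})'' --- i.e.\ exactly the algebraic unfolding you relegate to a sanity check at the end. What you do beyond that is give $N_{m,j}$ an explicit combinatorial meaning (number of completions of a representing string once $f_m$ is placed at $j$) and verify by downward induction on $m$ that this quantity obeys the recurrence~(\ref{eq:nij}); the theorem then drops out by partitioning on $f_1$. This is a genuinely more informative route: it explains \emph{why} the recurrence has the shape it does (conditioning on the next flag, with the $(m+1)^i$ factor coming from the free block governed by~\ref{eq:g4}), whereas the paper simply asserts that the two formulae agree after expansion. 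The cost is a bit more bookkeeping, which you handle correctly --- in particular the translation of $f_{m+1}<(m+1)k$ into the upper summation limit $(m+1)k-j-2$ and the block length $f_{m+1}-f_m-1$ are both right.
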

\begin{proof}
  The result follows easily by expanding $N_{m,j}$ using
  Equations~(\ref{eq:nij}) and Equation~(\ref{eq:Bk}).
\end{proof}

\subsection{From \ICDFAE's to Integers}
\label{sec:in}

Let $(s_i)_{i\in[0,kn-1]}$ be an \ICDFAE's string representation, and
let $(f_j)_{j\in[1,n-1]}$ be the corresponding sequence of flags. From
the sequence of flags we obtain the following number, $n_f$,
\begin{equation}
  \label{eq:nf}
  n_f\;=\;\sum\limits_{i=1}^{n-1}\sum\limits_{j=f_i+1}^{ik-1}(
    i^{j-f_{i}}
N_{i,j}(\prod\limits_{m=1}^{i-1}(m^{f_{m+1}-f_{m}-1}))
\end{equation}
\noindent which  is the number of the first \ICDFAE{}\  with flags
$(f_j)_{j\in[1,n-1]}$.  Now we must add  the information provided by
the rest of the elements of the string
$(s_i)_{i\in[0,kn-1]}$:
\begin{equation}
  \label{eq:nr}
   n_r\; = \;\sum\limits_{j=1}^{n-1}\left(\sum\limits_{l=f_j+1}^{f_{j+1}-1}
s_l(j+1)^{f_{j+1}-1-l}\left(\prod\limits_{m=j+1}^{n-1}(m+1)^{f_{m+1}-f_{m}-1}\right)\right)
\end{equation}

And the corresponding number is $n_s=n_f+n_r$.

\subsection{From Integers to \ICDFAE's}
\label{sec:ni}

Given an integer $0\leq m<B_{k,n}$ a string representing uniquely an
\ICDFAE{} can be obtained using a method inverse of the one in the
last section.  The flags $(f_j)_{j\in[1,n-1]}$ are generated from
right-to-left, by successive subtractions.  The rest of the string
$(s_i)_{i\in[0,kn-1]}$ is generate considering the remainders of integer
divisions. The algorithms are the following: 

\begin{tabular}[c]{c|c}
\begin{lstlisting}{ frame=rtbl, frameround=tttt}
$s = 1$
for $i$ = $1$ to $n-1$:
    $j=i*k-1$
    $p=i^{j-f_{i-1}-1}$
    while $j>=i-1$ and $m\geq p*s*N_{i,j}$:
          $m$ = $m$ - $N_{i,j}*p*s$
          $j = j -1$
          $p = p/i$
    $s = s*i^{j-f_{i-1}-1}$
    $f_i = j$
\end{lstlisting}
&
\begin{lstlisting}
$i=k*n-1$
$j=n-1$
while $m>0$ and $j > 0$:
   while $m>0$ and $i >f_j$:
      $s_i= m \mod (j+1)$
      $m = m \div (j+1)$
      $i = i-1$ 
   $i = i-1$ 
   $j = j-1$ 
\end{lstlisting}
\end{tabular}

\section{Final Remarks}
\label{sec:con}
The methods here presented were implemented and tested to obtain both
exact and approximate values for the density of minimal automata.
Champarnaud \emph{et al.} in \cite{champarnaud:_random_gener_dfas},
checked a conjecture of Nicaud that for $k=2$ the number of minimal
\ICDFA's is about $80\%$ of the total, by sampling automata with $100$
states (for all possible number of final states).  Our results also
corroborate that conjecture, being the exact values for some small
values of $n$ and samples for greater values.  In particular, for $k=2$
and $n=100$ we obtained the same results as Champarnaud \emph{et al.}.
It seems that for $k>2$ almost all \ICDFA{}'s are minimal. For $k=3,5$
and $n=100$ that was also checked by Champarnaud \emph{et al.}. For a
confidence interval of $99$\% and significance level of $1$\% the
following table presents the percentages of minimal \ICDFA{}'s for
several values of $k$ and $n$, and each possible number of final
states. 

\begin{center}
{\scriptsize
\begin{tabular}{|c|c|c|c|c|c|c|c|c|c|c|c|}\hline
$k \backslash n$&$5$&$6$&$7$&$8$&$9$&$10$&$20$&$40$&$80$&$160$\\\hline
 3 & $85.8$\% & $90.8$\% & $93.3$\% & $95.0$\% & $96.1$\% & $96.7$\% & $98.7$\% & $99.4$\% & $99.7$\% &$99.8$\%\\\hline
 5 & $93.0$\% & $96.5$\% & $98.2$\% & $99.1$\% & $99.5$\% & $99.8$\% & $100.0$\% & $100.0$\% & $100.0$\% &$100.0$\%\\\hline
 7 & $93.7$\% & $96.8$\% & $98.4$\% & $99.2$\% & $99.6$\% & $99.8$\% & $100.0$\% & $100.0$\% & $100.0$\%&-- \\\hline
 9 & $93.7$\% & $96.9$\% & $98.4$\% & $99.2$\% & $99.6$\% & $99.8$\% &
$100.0$\% & $100.0$\% & -- &--\\\hline
11 & $93.8$\% & $96.9$\% & $98.4$\% & $99.2$\% & $99.6$\% & $99.8$\% &
$100.0$\% & $100.0$\% & --& -- \\\hline
13 & $93.7$\% & $96.9$\% & $98.4$\% & $99.2$\% & $99.6$\% & $99.8$\% & $100.0$\% & $100.0$\% & --& -- \\\hline
\end{tabular}}
\vspace{.2cm}  
\end{center}

A web interface to the random generator can be found in the
\fado\ project web page~\cite{fado}. Bassino and Nicaud
in~\cite{bassino:_enumer_and_random_gener_of_acces_autom} presented
also a random generator of \ICDFA's based on Boltzmann Samplers,
recently introduced by Duchon \emph{et
  al.}~\cite{duchon04:_boltz_sampl_for_random_gener}. However the
sampler is uniform for partitions of a set with $kn$ elements into $n$
nonempty subsets (not for the universe of automata). These partitions
are related with string representations that verify only rule
\ref{eq:r1}. Based on the work here presented, it would be interesting
to study a better approximation, that would satisfy rule~\ref{eq:r2}.
\section{Acknowledgments}
We thank the anonymous referees for their
comments that helped to improve this~paper.

\end{document}